\documentclass[11pt]{article}
\usepackage{amsmath,amssymb,amsthm}
\usepackage{fullpage}
\usepackage[colorlinks]{hyperref}
\hypersetup{linkcolor=cyan,filecolor=cyan,citecolor=cyan,urlcolor=cyan}
\usepackage{xspace}
\usepackage{thm-restate,color,xcolor}
\usepackage{boxedminipage}
\usepackage[boxed]{algorithm}
\usepackage{epigraph}
\usepackage{amsmath}
\usepackage{mathtools}
\usepackage{framed}
\usepackage[framemethod=tikz]{mdframed}
\usepackage{titlesec}
\usepackage{lipsum}%
\usepackage{cleveref,aliascnt}
\usepackage{tikz}
\usepackage{wrapfig}
\usepackage{framed}
\usepackage[framemethod=tikz]{mdframed} 

\usetikzlibrary{shapes.geometric}
\usetikzlibrary{arrows}
\usetikzlibrary{arrows.meta}
\usetikzlibrary{patterns}
\usetikzlibrary{shapes.misc}

\theoremstyle{plain}
\newtheorem{thm}{Theorem}[section]
\newcommand{\BTHM}{\begin{thm}} \newcommand{\ETHM}{\end{thm}}
\newtheorem{cor}[thm]{Corollary}
\newcommand{\BCR}{\begin{cor}} \newcommand{\ECR}{\end{cor}}
\newtheorem{lem}[thm]{Lemma}
\newcommand{\BL}{\begin{lem}}   \newcommand{\EL}{\end{lem}}
\newtheorem{clm}[thm]{Claim}
\newcommand{\BCM}{\begin{clm}}   \newcommand{\ECM}{\end{clm}}
\newtheorem{prop}[thm]{Proposition}
\newcommand{\BP}{\begin{prop}}   \newcommand{\EP}{\end{prop}}
\newtheorem{assm}[thm]{Assumption}
\newcommand{\BASM}{\begin{assm}}   \newcommand{\EASM}{\end{assm}}

\theoremstyle{definition}
\newtheorem{defn}{Definition}[section]
\newcommand{\BD}{\begin{defn}}   \newcommand{\ED}{\end{defn}}
\newtheorem{con}[thm]{Conjecture}
\newcommand{\BCONJ}{\begin{con}}   \newcommand{\ECONJ}{\end{con}}

\theoremstyle{definition}
\newtheorem{problem}[thm]{Problem}
\newcommand{\BPR}{\begin{problem}}   \newcommand{\EPR}{\end{problem}}

\newenvironment{rem}{\noindent{\bf Remark:~~}}{}
\newcommand{\BREM}{\begin{rem}} \newcommand{\EREM}{\end{rem}}
\newenvironment{discussion}{\noindent{\bf Discussion:~~\\}}{}
\newcommand{\BDIS}{\begin{discussion}} \newcommand{\EDIS}{\end{discussion}}

\numberwithin{equation}{section}

\usepackage{wrapfig}

\usepackage{changepage}

\def\blackslug
     {\hbox{\hskip 1pt\vrule width 8pt height 8pt depth 1.5pt\hskip 1pt}}
\def\qed{\quad\blackslug\lower 8.5pt\null\par}

\newcommand{\Direachability}{\mathsf{Direachability}}

\newcommand{\OutSucc}[2]      {\mathcal{S}\ifblank{#1}{(#2)}{(#2,#1)}}

\usepackage{enumitem}

\usepackage{subcaption} 
\usepackage{booktabs}

\newtheorem{exmp}[thm]{Example}
\newtheorem{fact}[thm]{Fact}
\newcommand{\BEX}{\begin{exmp}} \newcommand{\EEX}{\end{exmp}}
\newcommand{\BF}{\begin{fact}}   \newcommand{\EF}{\end{fact}}
\newcommand{\Bcr}{\begin{techcorr}}
\newcommand{\Ecr}{\end{techcorr}}
\newcommand{\BDS}{\begin{description}}
\newcommand{\EDS}{\end{description}}
\newcommand{\BE}{\begin{enumerate}}
\newcommand{\EE}{\end{enumerate}}
\newcommand{\BI}{\begin{itemize}}
\newcommand{\EI}{\end{itemize}}
\newcommand{\BPF}{\begin{proof}}
\newcommand{\EPF}{\end{proof}}

\newcommand{\BB}{\begin{enumerate}}
\newcommand{\EB}{\end{enumerate}}

\usepackage{booktabs}
\usepackage{tabularx}

\newenvironment{wrapper}[1]
{
	\begin{center}
		\begin{minipage}{\linewidth}
			\begin{mdframed}[hidealllines=true, backgroundcolor=gray!20, leftmargin=0cm,innerleftmargin=0.5cm,innerrightmargin=0.5cm,innertopmargin=0.5cm,innerbottommargin=0.5cm,roundcorner=10pt]
				#1}
			{\end{mdframed}
		\end{minipage}
	\end{center}
} 

\usepackage{geometry}
\title{Multi-Source Reachability in Near-Optimal Time}
\author{
Shimon Kogan 
        \small Weizmann Institute\\
        \small shimon.kogan@weizmann.ac.il
\and				
Merav Parter \thanks{This project is funded by the European Research Council (ERC) under the European Union Horizon 2020 research and innovation programme (grant agreement No. 949083).}
        \small Weizmann Institute \\
        \small merav.parter@weizmann.ac.il
}
\date{}

\begin{document}
\maketitle

\begin{abstract}
The multi-source reachability problem asks to compute the reachable sets from a given subset of source vertices. For $n$-vertex digraphs $G=(V,E)$ and a subset of sources $S \subseteq V$ with $|S|=n^{\sigma}$ for some $\sigma \in [0,1]$, we present a near-optimal deterministic algorithm that solves this problem in $\tilde{O}(n^{\omega(\sigma)})$ time, where $\omega(\sigma)$ is the rectangular matrix multiplication exponent for multiplying
an $n^{\sigma}\times n$ matrix by an $n \times n$ matrix. For dense graphs, this yields reachability from up to $n^{0.32}$ sources in near-linear time, breaking the super-quadratic time barrier and improving over the state-of-the-art $n^{1+2/3\omega(\sigma)}$-time randomized algorithm of Elkin and Trehan [arXiv:2401.05628, 2024].
%
%

\end{abstract}


\pagenumbering{gobble}

\pagenumbering{arabic}

\section{Introduction}

We present a near-optimal-time algorithm for computing multi-source reachability in directed graphs. In the multi-source reachability problem, given a digraph $G=(V,E)$ and a subset of sources $S \subseteq V$, one must compute, for each source $s$, all vertices that are reachable from $s$. 
Despite extensive research over the years \cite{UllmanY91,Cohen1996SeparatorSP}, no known near-optimal algorithm for this problem improves over either running BFS/DFS from each source or computing the full transitive closure. Kogan and Parter \cite{FasterKoganP23} showed that there is not much hope for improving the combinatorial time bound to $o(|S|\cdot n^{2-o(1)})$, assuming the combinatorial BMM conjecture. Elkin and Trehan \cite{ElkinTrehan24} recently presented an $n^{1+2/3\omega(\sigma)}$-time algebraic algorithm using recent constructions of reachability shortcuts \cite{KoganPSODA22}. We show:

\BTHM\label{thm:multi-source}
    There is a deterministic algorithm that, given an $n$-vertex digraph $G=(V,E)$ and a subset of sources $S \subseteq V$ with $|S|=n^{\sigma}$, computes multi-source reachability in $\tilde{O}(n^{\omega(\sigma)})$ time, where $\omega(\sigma)$ is the rectangular matrix multiplication exponent for multiplying an $n^{\sigma}\times n$ matrix by an $n \times n$ matrix. This time complexity is near-optimal.
\ETHM

To show tightness of our time complexity, we use a simple reduction from multiplying an $n^{\sigma}\times n$ Boolean matrix by an $n \times n$ Boolean matrix. Consider a three-layered graph $(S,V_1,V_2)$ with $|S|=n^{\sigma}$, where the edges of $S \times V_1$ are determined by $A$ and the edges of $V_1 \times V_2$ are determined by $B$. Then computing multi-source reachability, represented by an $n^{\sigma} \times n$ matrix $C$, corresponds to computing $A \times B$. 
 
For various algorithmic applications, it is useful to also obtain the predecessor $S \times V$ matrix, which allows one to recover the dipaths in time linear in their lengths:
 
\smallskip
\noindent \textbf{Witness and Predecessor Matrices.} Given an $a \times b$ Boolean matrix $A$ and a $b \times c$ Boolean matrix $B$, a witness matrix for the matrix multiplication $C=A \times B$ is an $a \times c$ matrix $W$ defined as follows: If $C(i,j)=0$, set $W(i,j)=0$; if $C(i,j)=1$, then $W(i,j)$ is some index $k$ such that $A(i,k)=1$ and $B(k,j)=1$.
For an $a \times n$ matrix $R$ that represents the reachability from $a$ vertices $S=\{s_1,\ldots, s_a\}$ to all vertices $V=\{v_1,\ldots, v_n\}$ in a base graph $G=(V,E)$, the $a \times n$ predecessor matrix $P$ specifies the predecessor neighbor on some $s_i$ to $v_j$ path, if one exists, for every $s_i,v_j \in S \times V$. Formally, for every $s_i,v_j \in S \times V$ with $R(i,j)=1$, $P(i,j)=k$ where $(v_k,v_j)\in E(G)$ and $R(i,k)=1$. If $R(i,j)=0$, then $P(i,j)=0$.

Alon et al. \cite{AlonGMN92} presented a deterministic algorithm for computing the witness matrix in nearly the same time as the matrix multiplication.
\begin{thm}\label{thm:witness-det}
    There is a deterministic algorithm that, given binary matrices $A$ and $B$ of dimensions $n^{\sigma} \times n$ and $n\times n$, respectively, computes the $n^{\sigma} \times n$ witness matrix for the $A \times B$ multiplication in $\tilde{O}(n^{\omega(\sigma)})$ time.
\end{thm}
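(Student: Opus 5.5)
The plan is to follow the Alon--Galil--Margalit--Naor approach: reduce to products in which every entry has a unique witness, and derandomize the sampling with small explicit families. Throughout, all products are $n^{\sigma}\times n$ by $n\times n$ (possibly with an integer encoding), so each costs $\tilde{O}(n^{\omega(\sigma)})$.

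\textbf{Unique witnesses.} Compute $C=A\times B$ as an integer product. Let $\hat{A}(i,k)=k\cdot A(i,k)$ and compute $\hat{A}B$ over the integers; the entries are bounded by $n^2$, so this costs $\tilde{O}(n^{\omega(\sigma)})$ with $O(\log n)$-bit arithmetic. If entry $(i,j)$ has exactly one witness, then $(\hat{A}B)(i,j)$ is that witness. Checking a candidate $k$ takes $O(1)$ time: test $A(i,k)=B(k,j)=1$.

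\textbf{Isolating witnesses by restriction.} For a subset $R\subseteq[n]$ of middle indices, zero out the columns of $A$ outside $R$ and repeat the computation above. If entry $(i,j)$ has $w$ witnesses and $|R\cap W_{ij}|=1$, we obtain a witness. We handle each scale $2^t\le w<2^{t+1}$ separately, for $t=0,\dots,\log n$, so there are $O(\log n)$ scales. In the randomized version one samples $R$ at rate $2^{-t}$ and hits exactly one witness with constant probability. We derandomize by taking a deterministic family $\mathcal{F}_t$ of $\mathrm{polylog}(n)$ subsets with the following property: for every set $W$ of size in $[2^t,2^{t+1})$, some $R\in\mathcal{F}_t$ has $|R\cap W|=1$. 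We then run the procedure over all $R\in\mathcal{F}_t$ and all $t$, keep any verified candidate, and leave entries with $C(i,j)=0$ at zero. The total cost is $\mathrm{polylog}(n)$ products, i.e.\ $\tilde{O}(n^{\omega(\sigma)})$.

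\textbf{Main obstacle: the explicit family.} The difficulty is constructing $\mathcal{F}_t$ deterministically with only polylogarithmic size and polylogarithmic overhead. I would try the approach of Alon et al.: use $\epsilon$-biased or small-bias / almost $k$-wise independent sample spaces, or equivalently explicit ``isolating'' families built from $\Theta(\log n)$-wise independence via expanders, to get families of size $\mathrm{polylog}(n)$ that isolate one element of any set of size $\approx 2^t$. A simpler fallback is to accept a weaker guarantee, where the family only reduces the number of surviving witnesses, and iterate, still within polylog repetitions. Since Theorem~\ref{thm:witness-det} is attributed to \cite{AlonGMN92}, I would ultimately cite their construction for this family. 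The only point that needs checking is that the rectangular shape does not matter: every step is a column restriction of $A$ together with an $n^{\sigma}\times n$ by $n\times n$ product, so the $\tilde{O}(n^{\omega(\sigma)})$ bound carries over.
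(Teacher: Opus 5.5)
Your first two ingredients are the right framework, and they do transfer to the rectangular shape. With $\hat A(i,k)=k\cdot A(i,k)$ and the columns of $A$ restricted to $R$, the product $\hat A_R B$ returns the witness of every entry with $|R\cap W_{ij}|=1$, and a returned index can be verified in $O(1)$ time.

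The gap is the derandomization step. The family $\mathcal F_t$ you ask for has polylogarithmic size and isolates \emph{every} $W\subseteq[n]$ with $|W|\in[2^t,2^{t+1})$. Such a family does not exist once $2^t$ exceeds $\mathrm{polylog}(n)$: for $2^{t+1}\le n$, any such family has at least $2^t$ members. To see this, let $\mathcal F=\{R_1,\dots,R_m\}$ with $m<2^t$.
\begin{itemize}
\item Start from $U=[n]$ and repeatedly delete $R\cap U$ whenever $|R\cap U|=1$. This removes at most $m$ elements, and afterwards every $R\in\mathcal F$ meets $U$ in either $0$ or at least $2$ elements.
\item Take any $W\subseteq U$ of size $2^{t+1}-1-m$. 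While some $R$ has $|R\cap W|=1$, add to $W$ a second element of $R\cap U$.
\item Since $W$ only grows, each $R$ triggers at most one addition. So the final $W$ has size in $[2^{t+1}-1-m,\,2^{t+1}-1]\subseteq[2^t,2^{t+1})$, and no member of $\mathcal F$ isolates it.
\end{itemize}
At the top scales this forces $\Omega(n)$ restricted products.

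Iterating does not rescue your fallback either, as long as the restrictions are fixed in advance and extraction relies on isolation. The restrictions used in the final extraction form one fixed family, and the bound applies to it. This is also why the tools you mention cannot give what you state. A point of a small-bias or almost $\Theta(\log n)$-wise independent space (which has polynomial or larger size) isolates each \emph{fixed} $W$ with constant probability. But no polylogarithmic collection of points works for all $W$ simultaneously.

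What the randomized algorithm actually uses is that only the $n^{1+\sigma}$ sets $W_{ij}$ determined by $A$ and $B$ need to be isolated. The union bound is over these entries, not over all sets of size about $2^t$, and that is what gives $O(\log n)$ samples per scale. A deterministic version must therefore choose its restrictions depending on the input. For instance, it could search a small almost-independent sample space for points that are good for the actual $W_{ij}$'s, guided by quantities that are themselves computed with matrix products.

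That input-dependent search is the real content of the result of Alon et al., which the paper invokes as a black box without proof. Citing it is fine, but you should cite that statement rather than a universal explicit family. Your remark that the rectangular shape ``does not matter'' then has to be checked for this search too: every product it performs must be an $n^{\sigma}\times n$ by $n\times n$ product (or a polylogarithmic number of them).
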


\section{Multi-Source Reachability in Near-Optimal Time}

\textbf{Notation.} For a vertex $v$, let $\OutSucc{G}{v}$ be the successors of $v$ in $G$ (i.e., the vertices that are reachable from $v$ in $G$). Throughout, reachability is non-reflexive in the output: the trivial pair $(s,s)$ is not included merely because of the empty path. 
For a given source set $S \subseteq V$, our objective is to output the sets $\{\OutSucc{G}{s}\}_{s \in S}$. For ease of notation, for vertex subsets $V_1,V_2$ and a graph $G'$ with $V_1,V_2 \subseteq V(G')$, let $E_{G'}[V_1 \to V_2]$ be an $|V_1| \times |V_2|$ matrix that represents the $V_1 \times V_2$ edges in $G'$. Formally, letting $V_1=[a_1,\ldots,a_{\ell_1}]$ and $V_2=[b_1,\ldots,b_{\ell_2}]$ be the vertex subsets ordered by their topological ordering in $G'$, then 
\begin{equation}\label{eq:edge-matrix}
E_{G'}[V_1 \to V_2](i,j)=\begin{cases}
			1, & \text{if $(a_i,b_j)\in E(G')$}\\
      0, & \text{otherwise.}
		 \end{cases}
\end{equation}
Given two matrices $A, B \in \mathbb{R}^{a \times b}$, let $C=A \circ B$ be the $a \times 2b$ matrix obtained by concatenating the columns of matrix $B$ to those of matrix $A$. That is, the first (resp., last) $b$ columns of $C$ correspond to matrix $A$ (resp., $B$). For a given $n^{\sigma}\times n$ matrix $P$ where the rows correspond to $k=n^{\sigma}$ sources $S=\{s_1,\ldots, s_k\}$ and the columns to $n$ vertices $V=\{v_1,\ldots, v_n\}$, we sometimes override notation and write $P(s,v)$ for the matrix entry in the designated row and column of $s$ and $v$, respectively; that is, for $s=s_i$ and $v=v_j$, we have $P(s,v)=P(i,j)$.

\paragraph{Informal Description.} We assume, for clarity of exposition, that the input graph $G$ is a DAG; at the end of the section, we describe how to handle general graphs. 
The algorithm is recursive and works as follows. Let $V=[v_1,\ldots, v_n]$ be the topologically ordered set of vertices. Split $V$ into two equal-size parts, $V_\ell=[v_1,\ldots, v_{n/2}]$ and $V_r=[u_1=v_{n/2+1},\ldots, u_{n/2}=v_{n}]$, and assume that the algorithm has already computed the $S \times V_{\ell}$ reachability matrix, $R_{\ell}$, in the graph $G_{\ell}=G[V_{\ell}]$. By the topological 
ordering, any $S \times V_\ell$ dipath\footnote{I.e., a $u$-$v$ dipath for $u \in S$ and $v \in V_{\ell}$.} in $G$ must be contained in $G_\ell$. 
The goal is then to apply the algorithm recursively to compute the $S \times V_r$ reachability in some $(n/2+|S|)$-vertex graph $G_{r}$. To define $G_r$, one needs to account for the $S \times V_r$ paths that go through the $V_{\ell}$ vertices. This is done by solving the following matrix multiplication: $R'=R_{\ell}\times E_G[V_\ell \to V_r]$. Since $R_{\ell}$ is an $|S| \times n/2$ matrix and $E_G[V_\ell \to V_r]$ is an $n/2 \times n/2$ matrix, the computation of $R'$ takes $O(n^{\omega(\sigma)})$ time. The matrix $R'$ satisfies the following property: $R'(i,j)=1$ iff there is an $s_i$ to $u_j$ dipath in $G$ all of whose internal vertices are in $V_\ell$. The desired graph $G_r$ is then obtained by adding the edges defined by the matrix $R'$ to the graph $G[V_r]$. Namely, $E(G_r)=E(G[V_r]) \cup \{(s_i,u_j)~\mid~ R'(i,j)=1\}$. The algorithm then recurs on the instance $(S, V_r, G_r)$.

In particular, $G_r$ is not merely the induced graph $G[V_r]$; it is $G[V_r]$ augmented with shortcut edges from the sources in $S$ to vertices in $V_r$. These shortcut edges summarize the paths whose internal vertices lie in $V_\ell$.

To analyze the runtime, let $T(n)$ be the time complexity for computing multi-source reachability from $S$ sources to $n$ vertices. We then have $T(n)\leq 2T(n/2)+MM(|S|,n/2)$, where $MM(a,b)$ is the complexity of computing the matrix multiplication of an $a \times b$ matrix with a $b \times b$ matrix. Using the inequality $n^{\sigma} \leq 2^j \cdot (n/2^j)^{\sigma}$, solving this recurrence gives $T(n)=\tilde{O}(n^{\omega(\sigma)})$. We next provide the detailed algorithm description.

%
%




\paragraph{Algorithm $\Direachability$ (for DAGs).} The input is an $n$-vertex DAG $G$, a source set $S \subseteq V(G)$, and a subset of vertices $V'$ (initially, $V'=V$). The algorithm is also given the topological ordering of $V$ in $G$, and it treats both $V$ and $S$ as ordered sets according to this topological ordering. 
We assume, w.l.o.g., that $|V(G)|=2^k$ for some integer\footnote{Clearly this can be assumed by adding isolated vertices which increases the number of vertices by a factor of at most $2$.} $k$. 

The algorithm is recursive and works as follows. If $|V'|=1$, where $V'=\{v\}$, the algorithm outputs an $S \times V'$ reachability matrix according to the sources in $S$ that have an edge into $v$. Otherwise, the algorithm partitions $V'$ into two equal sets $V_\ell$ and $V_r$. It recursively computes the $S \times V_\ell$ reachability matrix in the graph $G_\ell=G[V_\ell]$, denoted by $R_\ell$. 
Observe that, by the ordering, for any $s\in S$ and $u \in V_\ell$, every $s$ to $u$ path in $G$ must be fully contained in $G[V_\ell]$. This no longer holds for $V_r$, since $s$ to $V_r$ paths may visit vertices in $V_\ell$. To handle this, the algorithm defines a graph $G_r=(V_r \cup S, E(G[V_r]) \cup E')$ over the vertices in $V_r \cup S$. The edge set $E'$ contains all edges $(s_i,u) \in S \times V_r$ such that there is an $s_i$ to $u$ path in $G$ whose internal vertices are in $V_\ell$. This edge set $E'$ is obtained by computing the matrix multiplication $R_{\ell} \times E_G[V_{\ell} \to V_r]$. Thus, $G_r$ is not the induced graph $G[V_r]$; it is $G[V_r]$ augmented with shortcut edges from the sources to vertices in $V_r$. Next, the algorithm recursively computes the $S \times V_r$ reachability matrix in $G_r$, denoted by $R_r$. Finally, the output $S\times V$ reachability matrix $R$ is obtained by concatenating the matrices $R_\ell$ and $R_r$. See below for formal pseudocode.

\begin{wrapper}\vspace{-2pt}
\begin{center}\textbf{Algorithm $\Direachability(G,S,V')$}
\end{center}
\begin{enumerate}
\item Let $V'=[v'_1,\ldots,v'_{|V'|}]$ and $S=[s_1,\ldots, s_\ell]$ be the topological ordering of $V'$ and $S$.
\begin{itemize}
\item If $|V'|=1$ do:
\begin{itemize}
\item Set $R(i,1)=1$ for every $s_i \in N^{in}_G(v'_1)$ and $P(i,1)=i$.
\item Return $R$ and $P$.
\end{itemize}
\end{itemize}

\item Let $V_{\ell}=[v'_1,\ldots, v'_{|V'|/2}]$ and $V_r=[u'_1=v'_{|V'|/2+1},\ldots, u'_{|V'|/2}=v'_{|V'|}]$.
\item Set $(R_{\ell},P_{\ell})=\Direachability(G[V_{\ell}],S,V_{\ell})$.
\item Let $Z_{\ell \to r}=R_{\ell} \times E_G[V_{\ell} \to V_r]$ and $W_{\ell \to r}$ be the corresponding witness matrix.
\item Let $E_{\ell \to r}=\{(s_i,u'_j) ~\mid~ Z_{\ell \to r}(i,j)=1\} \setminus E(G)$ and $E_r=E(G[V_{r}]) \cup E_{\ell \to r}$~.
\item Let $(R_{r},P_{r})=\Direachability(G_r,S,V_r)$ where $G_r=(V_{r} \cup S, E_r)$.
\item For every $(s_i,u'_j)\in E_{\ell \to r}$, update $P_r(i,u'_j)=W_{\ell \to r}(i,j)$.
\item Return $R=R_\ell \circ R_r$ and $P=P_\ell \circ P_r$.
\end{enumerate}
\end{wrapper}

\BL \label{lem:reachability-correctness}
The output matrix $R$ satisfies that $R(i,j)=1$ iff $v_j$ is reachable from $s_i$ under the non-reflexive convention above, for every $i \in \{1,\ldots, |S|\}$ and $j \in \{1,\ldots, n\}$. Moreover, $P$ is the predecessor matrix for $R$.
\EL
\begin{proof}
We show by induction on $i \in \{0,\ldots, \log n\}$ that the claim holds for $|V|\leq 2^i$ vertices. 
For the base case of $i=0$, the only vertices reported as reachable are those with an incoming edge from a source in $S$, and the correctness is immediate.

Assume now that the claim holds up to $i-1$, and consider a set $V'$ with $2^i$ vertices. Let $V_\ell,V_r$ be the partition of $V'$ based on the topological ordering of $V'$. By the induction assumption for $(i-1)$, 
we have that $(R_{\ell}, P_{\ell})=\Direachability(G[V_{\ell}],S,V_{\ell})$ are the correct reachability and predecessor matrices from $S$ to $V_\ell$ in $G$.

The algorithm defines a graph $G_r=(V_{r} \cup S, E(G[V_{r}]) \cup E')$ where $E'$ is a subset of $S \times V_r$ edges satisfying the following: for every $s_i \in S$ and $v'_j \in V_r$, we have $(s_i,v'_j) \in E'$ iff there is an $s_i$ to $v'_j$ path in $G$ whose internal vertices are in $V_\ell$. Hence $G_r$ is $G[V_r]$ augmented with shortcut edges, and it is not merely the induced graph $G[V_r]$. We therefore have
\begin{equation}\label{eq:right-reachability}
s_i \leadsto_{G_r} v'_j \quad\Longleftrightarrow\quad s_i \leadsto_G v'_j
\end{equation}
for every $s_i \in S$ and $v'_j \in V_r$. Indeed, every shortcut edge in $E'$ expands to the corresponding path in $G$ through $V_\ell$, and every $s_i$ to $v'_j$ path in $G$ can be represented in $G_r$ by replacing the prefix that goes through $V_\ell$ with the corresponding shortcut edge. By the induction assumption on $(i-1)$, $R_r$ and $P_{r}$ are the correct reachability and predecessor matrices from $S$ to $V_r$ in $G_r$. By \eqref{eq:right-reachability}, $R_r$ is also the correct reachability matrix for the vertices in $V_r$ in the original graph $G$.

It remains to fix the predecessor matrix, since $G_r$ is not a subgraph of $G$: it contains shortcut edges. Note that $E(G_r) \setminus E(G) = E_{\ell \to r}$. Hence, we only need to fix the predecessors of one-edge $s_i$-$u'_j$ paths $(s_i,u'_j)\in E_{\ell \to r}$. Since $W_{\ell \to r}$ is the witness matrix of $Z_{\ell \to r}=R_{\ell} \times E_G[V_{\ell} \to V_r]$, it is also the predecessor matrix of the positive entries of $Z_{\ell \to r}$. The correction is therefore valid because $P_r(i,j)$ is set to the index of the predecessor vertex of $u'_j$ on some $s_i$ to $u'_j$ path. Consequently, the output matrices $R_\ell \circ R_r$ and $P=P_{\ell}\circ P_r$ are the correct $S \times V$ reachability and predecessor matrices in $G$.
\end{proof}

Let $MM(a,b)$ be the time complexity for computing the matrix multiplication of an $a \times b$ matrix by a $b \times b$ matrix. 

\BL \label{lem:runtime}
Algorithm $\Direachability(G,S,V)$ can be implemented in $O(n^{\omega(\sigma)}\log n+n\cdot |S|)$ time where $n=|V(G)|$ and $|S|=n^{\sigma}$. 
\EL
\begin{proof}[Proof of Lemma \ref{lem:runtime}]
The runtime can be bounded by the recurrence
\[
W(n)=2W(n/2)+MM(|S|,n/2) \quad \mbox{for } n\geq 2,
\qquad
W(1)=MM(|S|,1).
\]
By unrolling the recurrence, we get: 

\begin{equation}\label{eq:reach-recurs}
W(n)=\sum_{j=1}^{\log n} 2^{j} \cdot MM\left(|S|, \frac{n}{2^j}\right) + n \cdot |S|.
\end{equation}
We next bound $MM(|S|, n/2^j)$. To multiply an $|S|\times n/2^j$ matrix by an $n/2^j \times n/2^j$ matrix, split the source set $S$ into\footnote{This holds as $n^{\sigma}\leq 2^j \cdot (n/2^j)^{\sigma}$.} at most $2^j$ subsets, each of size at most $(n/2^j)^{\sigma}$. This yields 

\[
MM\left(|S|,\frac{n}{2^j} \right)\leq 2^{j} \cdot \left(\frac{n}{2^j} \right)^{\omega(\sigma)}.
\]
Plugging this bound into Eq.~\eqref{eq:reach-recurs} gives
\[
W(n)
\leq \sum_{j=1}^{\log n} 2^{j}\cdot 2^{j}\left(\frac{n}{2^j}\right)^{\omega(\sigma)}
    + n\cdot |S|
= n^{\omega(\sigma)}
  \sum_{j=1}^{\log n} 2^{j(2-\omega(\sigma))}
  + n\cdot |S|.
\]
Since $\omega(\sigma)\geq 2$, the sum is at most $\log n$. Therefore,
\[
W(n)=O(n^{\omega(\sigma)} \log n+ n\cdot |S|).
\]

\end{proof}

\paragraph{Handling General Graphs.} For a general digraph, we reduce to a DAG instance $G'$ by contracting each strongly connected component (SCC), and then apply 
Algorithm $\Direachability$ to the contracted graph $G'$ with the source set corresponding to the SCCs that contain the input sources $S$. Given the predecessor matrix $P'$ for the contracted graph $G'$, as in \cite{AlonGMN92}, we open the contracted vertices and transform $P'$ into an $n^{\sigma}\times n$ predecessor matrix for the original graph $G$. For simplicity, let $\{1,\ldots, n\}$ be the IDs of the vertices in $G'$, and let $v_{i,1},\ldots, v_{i,\ell_i}$ be the IDs of the vertices in the $i$-th SCC. We compute a predecessor matrix $\hat{P}$ for the subgraphs induced by the SCCs. This is done by computing incoming and outgoing BFS trees rooted at $v_{i,1}$ for each SCC $i$. For every $v_{i,a}$ and $v_{i,b}$, $\hat{P}(v_{i,a},v_{i,b})$ indicates a predecessor neighbor of $v_{i,b}$ on some $v_{i,a}$ to $v_{i,b}$ path. We now use $P'$ and $\hat{P}$ to define the output $n^{\sigma}\times n$ matrix $P$, as follows. 
For every source $s \in S$, let $i_s$ be the identifier of the SCC containing $s$. Then, for every $P'(i_s,j)=k$, let $(v_{k,a},v_{j,b})$ be a $G$-edge that corresponds to the $(k,j)$ edge in $G'$. Set $P(s,v_{j,b})=v_{k,a}$ and for every $y \in \{1,\ldots, \ell_j\}\setminus \{b\}$, let $P(s,v_{j,y})=\hat{P}(v_{j,b},v_{j,y})$.
This entire computation takes $O(n^{2})$ time. 
The proof of \Cref{thm:multi-source} follows.

\bibliographystyle{alpha}
\bibliography{thesis}

\appendix
%
%
%
%
%
%
%

\end{document}